\date{}
\newtheorem{theorem}{Theorem}
\newtheorem{corollary}{Corollary}
\newtheorem{definition}{Definition}
\newtheorem{remark}{Remark}
\begin{document}

\title{Characterization for a generic construction of bent functions and its consequences}

\author{Yanjun Li, Jinjie Gao\thanks{Corresponding author
\newline\indent Y. Li is with  Institute of Statistics and Applied Mathematics, Anhui University of Finance and Economics,  Bengbu, Anhui 233030, China; and School of Computer Sciences, Fudan University, Shanghai, 200433, China
(Email: yanjlmath90@163.com).
\newline \indent J. Gao is with Shanghai Key Laboratory of Intelligent Information Processing, School of Computer Science, Fudan University, Shanghai 200433, China; and Shanghai Engineering Research Center of Blockchain, Shanghai 200433, China (Email:  jjgao18@fudan.edu.cn).
\newline \indent H. Kan is with Shanghai Key Laboratory of Intelligent Information Processing, School of Computer Science, Fudan University, Shanghai 200433, China; Shanghai Engineering Research Center of Blockchain, Shanghai 200433, China;  and Yiwu Research Institute of Fudan University, Yiwu City 322000, China (E-mail: hbkan@fudan.edu.cn).
\newline \indent J. Peng is with Mathematics and Science College of
Shanghai Normal University, Guilin Road \#100, Shanghai 200234, China (Emails:~jpeng@shnu.edu.cn).
\newline \indent L. Zheng and C. Chen are  with the School of Mathematics and Physics, University of South China, Hengyang, Hunan 421001, China (Email: zhenglijing817@163.com  and cchxuexi@126.com).}~,
   Haibin Kan, Jie Peng,  Lijing Zheng, and Changhui Chen}
\maketitle
\begin{abstract}
 In this letter, we give a characterization for a generic construction of bent functions. This characterization enables us to obtain another efficient construction of bent functions and  to give a positive answer on a problem of bent functions.
\end{abstract}
\noindent {\bf Index Terms:} Bent function, duals, cryptography, Walsh-Hadamard transform, Gold function.

\medskip
\noindent {\bf Mathematics Subject Classification 2020:} 06E30, 94A60, 94D10.

\section{Introduction}

Bent functions, introduced in \cite{Rothaus}, are those Boolean functions in an even number of variables having the highest nonlinearity. Such functions have been extensively studied almost five decades, because of their closely relationship with the theory of difference sets, and their significant applications in coding theory and cryptography \cite{Cbook}. 
 In the past, a large amount of work was done on the characterizations and constructions of bent functions. But until now, a complete classification is not finished and it remains elusive. Along with the deep-going of the research, the progress on bent functions becomes more and more difficult, even if a tiny progress is not easy. For a comprehensive book on bent functions, the interested readers are referred to \cite{BookMesnager} for details.

In this letter, we focus our attention on the characterizations and constructions of bent functions with the form
\begin{align}\label{eqintro1}
h(x)=f(x)+F\circ \phi(x),
\end{align}
where $f$ is a bent function on $\mathbb{F}_{2^n}$, $F$ is a Boolean function on $\mathbb{F}_2^r$, and $\phi=(\phi_1,\phi_2,\ldots, \phi_r)$ is an $(n,r)$-function. In fact, the research on the bent-ness of $h$ can be dated back to \cite{Carlet-2006}, where Carlet presented a sufficient condition for a particular case  of $h$  to be bent. 
That sufficient condition had been proved by Mesnager \cite{Mesnager} to be necessary. 
 Mesnager \cite{Mesnager} also studied the bent-ness of two particular cases of Carlet function, 
 from which Mesnager obtained a lot of bent functions and gave their duals. Thereafter, several papers (such as \cite{Li et al.-2021, TangCM, WangLB, XuGK, ZhLJ1}) were done for generalizing Carlet's and Mesnager's works. 

In this letter, 
we obtain a characterization  for the generic construction of bent functions given in \cite{Li et al.-2021}, which 
enables us to find another efficient construction of bent functions. This characterization also enables us to provide a positive answer on the problem of bent function proposed in Conclusion of \cite{Li et al.-2021}.


\section{Preliminaries}\label{sec:Preliminaries}

 Throughout the paper, let $n=2m$ be an even positive integer. 
 Let $\mathbb{F}_{2^n}$ be the finite field of order $2^n$, $\mathbb{F}_{2^n}^*=\mathbb{F}_{2^n}\backslash\{0\}$, and $\mathbb{F}_2^n$ be the $n$-dimensional vector space over $\mathbb{F}_2$. 


For a vector $\omega=(\omega_1,\omega_2,\ldots,\omega_n)\in\mathbb{F}_2^n$, the set ${\rm suppt}(\omega)=\{1\leq i\leq n: \omega_i\neq 0\}$ is said to be the {\em support} of $\omega$, whose cardinality is called the {\em (Hamming) weight} of $\omega$, denoted by $wt(\omega)$. Namely, $wt(\omega)=|{\rm suppt}(\omega)|$.

A mapping $\phi$ from $\mathbb{F}_2^n$ to $\mathbb{F}_2^r$ is called an {\it$(n,r)$-function}. When $n$ is divisible by $r$, the $(n,r)$-function
\begin{align*}
{\rm Tr}_r^n(x)=x+x^{2^r}+x^{2^{2r}}+\cdots+x^{2^{n-r}}
\end{align*}
is called the \emph{trace function}. The set of all $(n,1)$-functions (namely, all {\it Boolean functions}) is denoted by $\mathcal{B}_n$.


For a given Boolean function $f$ on $\mathbb{F}_2^n$,
the {\em Walsh-Hadamard transform} of $f$  is a mapping from $\mathbb{F}_2^n$ to $\mathbb{Z}$ defined as
\begin{align*}
W_f(\mu)=\sum_{x\in\mathbb{F}_2^n}(-1)^{f(x)+\mu\cdot x}, \hspace{0.3cm} \mu\in\mathbb{F}_2^n,
\end{align*}
and its {\em inverse transform} is given by
 \begin{align*}
(-1)^{f(\mu)}=2^{-n}\sum_{x\in\mathbb{F}_2^n}W_f(x)(-1)^{\mu\cdot x}, \hspace{0.3cm} \mu\in\mathbb{F}_2^n,
\end{align*}
where $\mu\cdot x$ denotes the canonical  inner product of $\mu$ and $x$ (in $\mathbb{F}_{2^n}$, $\mu\cdot x={\rm Tr}_1^n(\mu x)$).

 The {\it first derivative} of $f$ in terms of $\mu\in\mathbb{F}_2^n$ is defined as
\begin{align*}
D_{\mu}f(x)=f(x)+f(x+\mu),
\end{align*}
and the {\it second derivative} of $f$ in terms of $\mu,\nu\in\mathbb{F}_2^n$ is defined as
\begin{align*}
D_{\mu}D_{\nu}f(x)=f(x)+f(x+\mu)+f(x+\nu)+f(x+\mu+\nu).
\end{align*}

\begin{definition}\label{defbent}
A Boolean function $f$ on $\mathbb{F}_2^n$ is called bent if $n$ is even and $W_f(\mu)=\pm2^{\frac{n}{2}}$ for all $\mu\in\mathbb{F}_2^n$.
\end{definition}

Bent functions always appear in pairs, that is, for any bent function $f$ on $\mathbb{F}_2^n$, there is always a unique bent function $f^*$ such that $W_f(\mu)=2^{\frac{n}{2}}(-1)^{f^*(\mu)}$ for all $\mu\in\mathbb{F}_2^n$ (in  literatures, $f^*$ is called the {\it dual} of $f$). 


\section{A characterization of a generic construction of  bent functions}\label{sec:frame}

In \cite{Li et al.-2021}, the authors have given a generic construction of bent functions, which generalizes the constructions of bent functions given in \cite{Carlet-2006, Mesnager, TangCM, WangLB, XuGK, ZhLJ1}. We restate it as follows.

\begin{theorem}\label{thlyj}\cite[Theorem 3]{Li et al.-2021}
Let $i$ be an integer with $1\leq i\leq r$,  $f,g_i\in\mathcal{B}_n$, and let $\phi=(\phi_1,\phi_2,\ldots,\phi_r)$ be the $(n,r)$-function with $\phi_i=f+g_i$. If the sum of any odd number of functions in $f,g_1,\ldots,g_r$ is a bent function, and its dual is equal to the sum of the duals of corresponding bent functions. Then for any Boolean function $F$ on $\mathbb{F}_2^r$, the function $h$ given by \eqref{eqintro1} is bent, and the dual of $h$ is
\begin{align*}
h^*(x)=f^*(x)+F\circ \varphi(x),
\end{align*}
where $\varphi=(\varphi_1,\varphi_2,\ldots,\varphi_r)$ is the $(n,r)$-function with $\varphi_i(x)=f^*(x)+g_i^*(x)$, $1\leq i\leq r$.
\end{theorem}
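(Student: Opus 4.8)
The plan is to compute the Walsh--Hadamard transform $W_h(a)$ directly and show it equals $2^{n/2}(-1)^{f^*(a)+F\circ\varphi(a)}$, which simultaneously proves bentness and identifies the dual. The natural starting point is to replace $(-1)^{F\circ\phi(x)}$ by the inverse Walsh expansion of $F$ over $\mathbb{F}_2^r$, namely $(-1)^{F(\phi(x))}=2^{-r}\sum_{\omega\in\mathbb{F}_2^r}W_F(\omega)(-1)^{\omega\cdot\phi(x)}$. Substituting this into $W_h(a)=\sum_{x}(-1)^{f(x)+F\circ\phi(x)+a\cdot x}$ and interchanging the two summations, everything reduces to understanding, for each fixed $\omega$, the inner character sum $S_\omega(a)=\sum_{x}(-1)^{f(x)+\omega\cdot\phi(x)+a\cdot x}$.

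The crux is a parity argument on the exponent $f(x)+\omega\cdot\phi(x)$. Writing $\omega\cdot\phi(x)=\sum_{i:\omega_i=1}(f(x)+g_i(x))=wt(\omega)f(x)+\sum_{i:\omega_i=1}g_i(x)$ and reducing the coefficient $wt(\omega)$ modulo $2$, I would split into two cases. When $wt(\omega)$ is odd the two copies of $f$ cancel and the exponent becomes $\sum_{i:\omega_i=1}g_i(x)$, a sum of an odd number of the $g_i$. When $wt(\omega)$ is even the term $f$ survives and the exponent becomes $f(x)+\sum_{i:\omega_i=1}g_i(x)$, a sum of $1+wt(\omega)$, again an odd number, of the functions $f,g_1,\ldots,g_r$. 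In both cases the exponent is a sum of an odd number of functions drawn from $f,g_1,\ldots,g_r$, so the hypothesis guarantees it is a bent function; call it $b_\omega$. Consequently $S_\omega(a)=W_{b_\omega}(a)=2^{n/2}(-1)^{b_\omega^*(a)}$.

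Next I would translate the dual using the second hypothesis. In the odd-weight case $b_\omega^*=\sum_{i:\omega_i=1}g_i^*$, while in the even-weight case $b_\omega^*=f^*+\sum_{i:\omega_i=1}g_i^*$. Substituting $g_i^*=f^*+\varphi_i$ and once more reducing $wt(\omega)$ modulo $2$, both expressions collapse to the single formula $b_\omega^*(a)=f^*(a)+\sum_{i:\omega_i=1}\varphi_i(a)=f^*(a)+\omega\cdot\varphi(a)$. Hence $S_\omega(a)=2^{n/2}(-1)^{f^*(a)}(-1)^{\omega\cdot\varphi(a)}$, and factoring the $\omega$-independent term out of the sum gives
\[
W_h(a)=2^{n/2}(-1)^{f^*(a)}\Big(2^{-r}\sum_{\omega\in\mathbb{F}_2^r}W_F(\omega)(-1)^{\omega\cdot\varphi(a)}\Big).
\]
Finally I would recognize the bracketed quantity as the inverse Walsh transform of $F$ evaluated at $y=\varphi(a)$, which equals $(-1)^{F(\varphi(a))}$, yielding $W_h(a)=2^{n/2}(-1)^{f^*(a)+F\circ\varphi(a)}$ and establishing both claims at once.

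I expect the main obstacle to be the careful bookkeeping in the parity argument: verifying that the two parity regimes for $wt(\omega)$ both land on bent functions is exactly where the ``odd number of functions'' hypothesis is indispensable and cannot be weakened, and one must then check that the corresponding dual formulas for the two regimes genuinely unify into $f^*+\omega\cdot\varphi$ after the substitution $g_i^*=f^*+\varphi_i$. Everything surrounding this is a routine Walsh-expansion manipulation.
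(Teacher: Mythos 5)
Your proof is correct and follows essentially the same route as the paper: the paper's proof of Theorem~\ref{thgbent} is precisely your Walsh-expansion computation, and its proof of Corollary~\ref{corlyj} is precisely your parity analysis of $f+\omega\cdot\phi$ and of the corresponding duals after substituting $g_i^*=f^*+\varphi_i$. The only difference is organizational: the paper routes the argument through the abstract \nameref{propty} (so that Theorem~\ref{thlyj} follows from Theorem~\ref{thgbent} together with Corollary~\ref{corlyj}), whereas you inline both steps into a single direct evaluation of $W_h$.
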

 
 Below, we want to generalize Theorem \ref{thlyj} by using the following property.

\begin{definition}[$\mathbf{P_r}$]\label{propty}
Let $f$ be a Boolean function over $\mathbb{F}_{2^n}$. If there is an $(n,r)$-function  $\phi=(\phi_1,\phi_2,\ldots, \phi_r)$ such that the following two conditions are satisfied:
\begin{enumerate}[label=(\roman*)]
 \item \label{item1}  $f(x)+\omega\cdot\phi(x)=f(x)+\sum_{i=1}^r\omega_i \phi_i$ is bent for any $\omega=(\omega_1,\omega_2,\ldots,\omega_r)\in\mathbb{F}_2^r$;

 \item \label{item2} there is an $(n,r)$-function $\varphi=(\varphi_1,\varphi_2,\ldots,\varphi_r)$ such that $\big(f(x)+\omega\cdot\phi(x)\big)^*=f^*(x)+\omega\cdot \varphi(x)$ for any $\omega\in\mathbb{F}_2^r$,
\end{enumerate}
then we say that $f$ satisfies \nameref{propty} with respect to the $(n,r)$-function $\phi$. 
\end{definition}


\begin{theorem}\label{thgbent}
Let $n=2m$. Let $\phi$ be an  $(n,r)$-function, and let $f$ be a Boolean function on $\mathbb{F}_{2^n}$ satisfying \nameref{propty} with respect to $\phi$. Then for any Boolean function $F$ on $\mathbb{F}_2^r$, the function $h$ given by \eqref{eqintro1}
is bent, and the dual of $h$ is
\begin{align*}
h^*(x)=f^*(x)+F\circ\varphi(x).
\end{align*}
\end{theorem}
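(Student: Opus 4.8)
The plan is to compute the Walsh--Hadamard transform of $h$ directly and show it equals $\pm 2^{n/2}$ everywhere, simultaneously reading off the dual. First I would expand $F$ in its Fourier (numerical) representation over $\mathbb{F}_2^r$: writing
\begin{align*}
(-1)^{F(y)}=2^{-r}\sum_{\omega\in\mathbb{F}_2^r}W_F(\omega)(-1)^{\omega\cdot y},
\end{align*}
and substitute $y=\phi(x)$. Then for any $\mu\in\mathbb{F}_{2^n}$,
\begin{align*}
W_h(\mu)=\sum_{x}(-1)^{f(x)+F(\phi(x))+\mu\cdot x}=2^{-r}\sum_{\omega\in\mathbb{F}_2^r}W_F(\omega)\sum_{x}(-1)^{f(x)+\omega\cdot\phi(x)+\mu\cdot x}.
\end{align*}
The inner sum is exactly $W_{f+\omega\cdot\phi}(\mu)$, the Walsh transform of the function appearing in condition \ref{item1} of \nameref{propty}.

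The key step is to invoke \nameref{propty}. By condition \ref{item1}, each $f+\omega\cdot\phi$ is bent, so $W_{f+\omega\cdot\phi}(\mu)=2^{n/2}(-1)^{(f+\omega\cdot\phi)^*(\mu)}$; and by condition \ref{item2}, its dual is $f^*+\omega\cdot\varphi$. Hence
\begin{align*}
W_h(\mu)=2^{n/2-r}\sum_{\omega\in\mathbb{F}_2^r}W_F(\omega)(-1)^{f^*(\mu)+\omega\cdot\varphi(\mu)}=2^{n/2}(-1)^{f^*(\mu)}\cdot 2^{-r}\sum_{\omega\in\mathbb{F}_2^r}W_F(\omega)(-1)^{\omega\cdot\varphi(\mu)}.
\end{align*}
The remaining sum is precisely the Fourier inversion formula for $F$ evaluated at $\varphi(\mu)$, so it collapses to $(-1)^{F(\varphi(\mu))}$. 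This yields $W_h(\mu)=2^{n/2}(-1)^{f^*(\mu)+F(\varphi(\mu))}$, which simultaneously shows $h$ is bent (the value is $\pm 2^{n/2}$) and identifies its dual as $h^*=f^*+F\circ\varphi$, exactly as claimed.

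I do not expect a serious obstacle here; the argument is a clean double-counting interchange, and the design of \nameref{propty} is evidently tailored to make both uses of the property fit. The one point requiring care is the order of the two Fourier steps: the expansion of $F$ must be the \emph{forward} Walsh transform so that, after applying duality, the leftover $\omega$-sum matches the \emph{inverse} transform of $F$ at the point $\varphi(\mu)$. I would state the inversion identity for $F$ explicitly to justify the final collapse, and note that the interchange of the finite sums over $x$ and $\omega$ needs no justification beyond finiteness. It is worth remarking that condition \ref{item2} is used in full strength: it guarantees that the duals of all $2^r$ bent functions $f+\omega\cdot\phi$ are governed by a single $(n,r)$-function $\varphi$ in an affine-linear way in $\omega$, which is exactly what makes the $\omega$-sum reassemble into $(-1)^{F(\varphi(\mu))}$ rather than an unstructured expression.
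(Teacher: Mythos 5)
Your proof is correct and is essentially identical to the paper's own argument: both expand $(-1)^{F\circ\phi(x)}$ via the inversion formula for $W_F$, interchange the finite sums to recognize the inner sum as $W_{f+\omega\cdot\phi}(\mu)$, apply both conditions of \nameref{propty} to replace it by $2^{m}(-1)^{f^*(\mu)+\omega\cdot\varphi(\mu)}$, and collapse the remaining $\omega$-sum back into $(-1)^{F(\varphi(\mu))}$. No gaps; the remarks on the interchange of sums and the dual role of the inversion identity match the paper's computation exactly.
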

\begin{proof}
By the definition of the inverse Walsh-Hadamard transform, it holds that
\begin{align*}
(-1)^{F\circ \phi(x)}=2^{-r}\sum_{\omega\in\mathbb{F}_2^r}W_F(\omega)(-1)^{\omega\cdot \phi(x)}, ~~\forall~x\in\mathbb{F}_2^n.
\end{align*}
Hence, the Walsh-Hadamard transform of $h$ at $\beta\in\mathbb{F}_{2^n}$ is that
\begin{align*}
W_h(\beta)=&\sum_{x\in\mathbb{F}_{2^n}}(-1)^{f(x)+{\rm Tr}_1^n(\beta x)}(-1)^{F\circ\phi(x)}\\
=&2^{-r}\sum_{x\in\mathbb{F}_{2^n}}(-1)^{f(x)+{\rm Tr}_1^n(\beta x)}\sum_{\omega\in\mathbb{F}_2^r}W_F(\omega)(-1)^{\omega \cdot \phi(x)}\\
=&2^{-r}\sum_{\omega\in\mathbb{F}_2^r}W_F(\omega)\sum_{x\in\mathbb{F}_{2^n}}(-1)^{f(x)+{\rm Tr}_1^n(\beta x)+\omega \cdot \phi(x)}\\
=&2^{-r}\sum_{\omega\in\mathbb{F}_2^r}W_F(\omega) W_g(\beta),
\end{align*}
where $g(x)=f(x)+\omega \cdot \phi(x)$. Recall that $f$ satisfies \nameref{propty} with respect to $\phi$, that is, $g$ is bent and $g^*(x)=f^*(x)+\omega\cdot \varphi(x)$ for any $\omega\in\mathbb{F}_2^r$. Hence, we have
\begin{align*}
W_h(\beta)=2^{m-r}\sum_{\omega\in\mathbb{F}_2^r}W_F(\omega) (-1)^{f^*(\beta)+\omega\cdot\varphi(\beta)}=2^{m}(-1)^{f^*(\beta)+F\circ\varphi(\beta)}.
\end{align*}
The proof is completed.
\end{proof}



\begin{corollary}\label{corlyj}
Theorem  \ref{thgbent} is reduced to that of Theorem \ref{thlyj} when $\phi=(\phi_1,\phi_2,\ldots, \phi_r)$ is an $(n,r)$-function with $\phi_i=f+g_i$,  $1\leq i\leq r$, where
 $f$ and $g_i$ are any Boolean functions on $\mathbb{F}_{2^n}$.
\end{corollary}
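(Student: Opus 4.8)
The plan is to verify that, under the specialization $\phi_i=f+g_i$, the condition \nameref{propty} is equivalent to the hypotheses of Theorem \ref{thlyj}, and that the dual produced by Theorem \ref{thgbent} coincides with the one stated in Theorem \ref{thlyj}; the conclusion then follows at once by invoking Theorem \ref{thgbent}. The computational heart is a parity bookkeeping of the sum $f(x)+\omega\cdot\phi(x)$.

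First I would expand, for $\omega=(\omega_1,\ldots,\omega_r)\in\mathbb{F}_2^r$ with support $S={\rm suppt}(\omega)$,
\[
f(x)+\omega\cdot\phi(x)=f(x)+\sum_{i\in S}\big(f(x)+g_i(x)\big)=(1+wt(\omega))f(x)+\sum_{i\in S}g_i(x),
\]
all arithmetic being in $\mathbb{F}_2$. Reading the coefficient $1+wt(\omega)$ modulo $2$: when $wt(\omega)$ is even this is $f+\sum_{i\in S}g_i$, a sum of $1+wt(\omega)$ (an odd number) of the functions $f,g_1,\ldots,g_r$; when $wt(\omega)$ is odd it is $\sum_{i\in S}g_i$, again a sum of $wt(\omega)$ (an odd number) of these functions. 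Conversely, every odd-cardinality sum of $f,g_1,\ldots,g_r$ arises this way—those containing $f$ from even-weight $\omega$, those omitting $f$ from odd-weight $\omega$. Hence, as $\omega$ ranges over $\mathbb{F}_2^r$, the functions $f+\omega\cdot\phi$ range exactly over all sums of an odd number of the functions in $\{f,g_1,\ldots,g_r\}$, so Item \ref{item1} of \nameref{propty} is equivalent to the assumption of Theorem \ref{thlyj} that every such odd sum is bent.

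Next I would pin down $\varphi$ for Item \ref{item2}. Taking the candidate $\varphi_i(x)=f^*(x)+g_i^*(x)$ from Theorem \ref{thlyj}, the same parity computation gives $f^*+\omega\cdot\varphi=(1+wt(\omega))f^*+\sum_{i\in S}g_i^*$. Comparing this with the dual of $f+\omega\cdot\phi$ computed under the Theorem \ref{thlyj} hypothesis—namely the sum of the duals of the corresponding odd collection—I would check the two parity cases separately: for $wt(\omega)$ even both sides equal $f^*+\sum_{i\in S}g_i^*$, and for $wt(\omega)$ odd both sides equal $\sum_{i\in S}g_i^*$. Thus Item \ref{item2} holds with precisely this $\varphi$, and conversely \nameref{propty} forces the dual of each odd sum to be the sum of the corresponding duals, matching the second hypothesis of Theorem \ref{thlyj}.

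With both items established, Theorem \ref{thgbent} applies, and its dual $h^*=f^*+F\circ\varphi$ becomes $h^*(x)=f^*(x)+F(\varphi_1(x),\ldots,\varphi_r(x))$ with $\varphi_i=f^*+g_i^*$, exactly the dual asserted in Theorem \ref{thlyj}. I expect the only delicate point to be the parity bookkeeping in $wt(\omega)$ together with the verification that the $\omega$-parametrization is \emph{onto} the set of all odd-cardinality sums—so that no hypothesis of Theorem \ref{thlyj} is left unused and none is spuriously introduced. Once that correspondence is nailed down, the remainder is a direct substitution.
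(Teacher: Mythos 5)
Your proposal is correct and follows essentially the same route as the paper's own proof: the same parity bookkeeping of $f(x)+\omega\cdot\phi(x)$ according to $wt(\omega)$, the identification of these sums with the odd-cardinality sums of $f,g_1,\ldots,g_r$, the pinning down of $\varphi_i=f^*+g_i^*$ (which the paper extracts from the singleton-support case ${\rm suppt}(\omega)=\{i\}$, exactly the step your converse direction implicitly relies on), and the final appeal to Theorem \ref{thgbent}. The only cosmetic difference is that you make the surjectivity of the $\omega\mapsto$ odd-sum correspondence explicit, which the paper leaves implicit.
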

\begin{proof}
Suppose that  $\phi=(\phi_1,\phi_2,\ldots, \phi_r)$ with $\phi_i=f+g_i$, $1\leq i\leq r$. Then for any $\omega=(\omega_1,\omega_2,\ldots, \omega_r)\in\mathbb{F}_2^r$, we have
\begin{align*}
f(x)+\omega\cdot \phi(x)=f(x)+\sum_{i=1}^r\omega_i(f(x)+g_i(x))
=\begin{cases}
G_{\omega}(x),\hspace{0.3cm} &{\rm if}~~wt(\omega)~{\rm is~ odd},\\
f(x)+G_{\omega}(x),\hspace{0.3cm} &{\rm if}~~wt(\omega)~{\rm is~even},
\end{cases}
\end{align*}
where $G_{\omega}(x)=\omega_1g_1(x)+\omega_2g_2(x)+\cdots+\omega_rg_r(x)$. Therefore, Item \ref{item1} of \nameref{propty} holds if and only if the sum of any odd number of functions in $f,g_1,g_2,\ldots,g_r$ is bent. When ${\rm suppt}(\omega)=\{i\}$, we have
\begin{align*}
f(x)+\omega\cdot \phi(x)=g_i(x)\hspace{0.3cm}{\rm and} \hspace{0.3cm} f^*(x)+\omega\cdot \varphi(x)=f^*(x)+\varphi_i(x).
\end{align*}
So Item \ref{item2} of \nameref{propty} holds only if $\varphi_i(x)=f^*(x)+g_i^*(x)$ for any integer $i$, $1\leq i\leq r$. In this case,
 \begin{align*}
 f^*(x)+\omega\cdot \varphi(x)=f^*(x)+\sum_{i=1}^r\omega_i(f^*(x)+g_i^*(x))
 =\begin{cases}
G_{\omega}^*(x),\hspace{0.3cm} &{\rm if}~~wt(\omega)~{\rm is~ odd},\\
f^*(x)+G_{\omega}^*(x),\hspace{0.3cm} &{\rm if}~~wt(\omega)~{\rm is~even},
\end{cases}
 \end{align*}
 where $G_{\omega}^*(x)=\omega_1g_1^*(x)+\omega_2g_2^*(x)+\cdots+\omega_rg_r^*(x)$.
Hence, Item \ref{item2} of \nameref{propty} holds if and only if $(G_\omega)^*=G_{\omega}^*$ when $wt(\omega)$ is odd, and $(f+G_{\omega})^*=f^*+G_{\omega}^*$ when $wt(\omega)$ is even. Equivalently, the dual of the sum of any odd number of functions in $f,g_1,g_2,\ldots,g_r$ is equal to the sum of the duals of corresponding bent functions.
This completes the proof.
\end{proof}

From the proof of Corollary \ref{corlyj}, it is easily seen that for a given Boolean function $f$ on $\mathbb{F}_{2^n}$, and an $(n,r)$-function $\phi=(\phi_1,\phi_2,\ldots,\phi_r)$, \nameref{propty} holds if and only if the sum of any odd number of functions in $f, f+\phi_1,f+\phi_2,\ldots,f+\phi_r$ is bent, and its dual is equal to the sum of the duals of corresponding bent functions. Namely, Theorem \ref{thlyj} is another characterization of Theorem \ref{thgbent}. 
 Note that Theorem \ref{thlyj} was proved by induction in \cite{Li et al.-2021}. Here we provide a more simple alternative proof from another perspective.

Theorem \ref{thgbent} allows us to deduce the following result.

\begin{corollary}\label{cornew}
Let $n=2m$. Let $f$ and $g$ be two bent functions on $\mathbb{F}_{2^n}$. 
Let $\mu_2,\mu_3,\ldots,\mu_r\in\mathbb{F}_{2^n}^*$.
 If the following two conditions are satisfied:
\begin{enumerate}[label=(\Alph*)]
 \item \label{itemA}  $D_{\mu_i}D_{\mu_j}f^*=0$ for any  $2\leq i<j\leq r$;

 \item \label{itemB} for any  $\omega'=(\omega_2,\omega_3,\ldots,\omega_r)\in\mathbb{F}_2^{r-1}$, it holds that
 \begin{align}\label{eqitemB}
 g^*(x+\sum_{i=2}^r\omega_i\mu_i)
 =\begin{cases}
  g^*(x)+f^*(x)+\sum_{i=2}^r\omega_if^*(x+\mu_i),\hspace{0.2cm}&{\rm if}~wt(\omega'){\rm~ is~ odd},\\
  g^*(x)+\sum_{i=2}^r\omega_if^*(x+\mu_i),\hspace{0.2cm}&{\rm if}~wt(\omega'){\rm~ is~ even},
  \end{cases}
 \end{align}
\end{enumerate}
then for any Boolean function $F$ on $\mathbb{F}_2^r$, the function $h$ given by
\begin{align*}
 h(x)=f(x)+F(f(x)+g(x), {\rm Tr}_1^n(\mu_2x), {\rm Tr}_1^n(\mu_3x),\ldots,{\rm Tr}_1^n(\mu_rx))
 \end{align*}
 is bent. Moreover, the dual of $h$ is
\begin{align*}
h^*(x)=f^*(x)+F(\varphi_1,\varphi_2,\ldots,\varphi_r),
 \end{align*}
 where $\varphi_1(x)=f^*(x)+g^*(x)$ and $\varphi_i(x)=f^*(x)+f^*(x+\mu_i)$ for any integer $2\leq i\leq r$.
\end{corollary}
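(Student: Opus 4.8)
The plan is to realize $h$ as a single instance of the framework in Theorem \ref{thgbent}, so that the whole statement (both bentness and the dual) reduces to checking that $f$ satisfies \nameref{propty} with respect to a suitable $(n,r)$-function. Concretely, I would take $\phi=(\phi_1,\phi_2,\ldots,\phi_r)$ with $\phi_1=f+g$ and $\phi_i(x)={\rm Tr}_1^n(\mu_ix)$ for $2\leq i\leq r$, together with the candidate data $\varphi=(\varphi_1,\ldots,\varphi_r)$ given by $\varphi_1=f^*+g^*$ and $\varphi_i(x)=f^*(x)+f^*(x+\mu_i)$ for $2\leq i\leq r$. With these choices, $F\circ\phi$ is exactly the second summand of $h$, and once Items \ref{item1} and \ref{item2} of \nameref{propty} are verified, Theorem \ref{thgbent} yields the bentness of $h$ and the stated dual $h^*=f^*+F\circ\varphi$ at once.

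For Item \ref{item1}, I would split on the value of $\omega_1$. Writing $a=\sum_{i=2}^r\omega_i\mu_i$, the outer $f$ cancels against the $f$ inside $\phi_1$ precisely when $\omega_1=1$, so that $f+\omega\cdot\phi=f(x)+{\rm Tr}_1^n(ax)$ if $\omega_1=0$ and $f+\omega\cdot\phi=g(x)+{\rm Tr}_1^n(ax)$ if $\omega_1=1$. Since adding a linear form to a bent function leaves it bent (one has $W_{f+{\rm Tr}_1^n(a\cdot)}(\mu)=W_f(\mu+a)$), both cases are bent by the hypotheses that $f$ and $g$ are bent, and Item \ref{item1} holds automatically with no extra assumption.

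The substance is Item \ref{item2}, governed by the same case split. For $\omega_1=0$ the identity $(f+\omega\cdot\phi)^*=f^*+\omega\cdot\varphi$ becomes $f^*\big(x+\sum_{i=2}^r\omega_i\mu_i\big)=f^*(x)+\sum_{i=2}^r\omega_i\big(f^*(x)+f^*(x+\mu_i)\big)$; this telescoping formula holds for every $\omega'=(\omega_2,\ldots,\omega_r)$ if and only if $D_{\mu_i}D_{\mu_j}f^*=0$ for all $2\leq i<j\leq r$, which is condition \ref{itemA} (this is the equivalence already invoked in Corollary \ref{corzlj}, proved by induction on $wt(\omega')$). For $\omega_1=1$ the $f^*+g^*$ part of $\varphi_1$ cancels the outer $f^*$, and the required identity becomes $g^*\big(x+\sum_{i=2}^r\omega_i\mu_i\big)=g^*(x)+wt(\omega')\,f^*(x)+\sum_{i=2}^r\omega_if^*(x+\mu_i)$. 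The key observation is that modulo $2$ the coefficient $wt(\omega')$ of $f^*(x)$ equals $1$ exactly when $wt(\omega')$ is odd and $0$ otherwise, so this is precisely the two-case formula \eqref{eqitemB} of condition \ref{itemB}.

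With Items \ref{item1} and \ref{item2} established under \ref{itemA} and \ref{itemB}, the corollary follows from Theorem \ref{thgbent}. I expect the only delicate point to be the parity bookkeeping in the $\omega_1=1$ branch of Item \ref{item2}: one must track the coefficient $wt(\omega')$ of $f^*(x)$ honestly and recognise that its reduction modulo $2$ is exactly what produces the odd/even dichotomy written in \eqref{eqitemB}; the $\omega_1=0$ branch is then routine once the equivalence from Corollary \ref{corzlj} is cited.
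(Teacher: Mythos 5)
Your proposal is correct and follows essentially the same route as the paper's own proof: the same choice of $\phi$ (with $\phi_1=f+g$ and $\phi_i(x)={\rm Tr}_1^n(\mu_ix)$), the same case split on $\omega_1$ for both items of \nameref{propty}, the same appeal to Corollary \ref{corzlj} to convert the $\omega_1=0$ branch into condition \ref{itemA}, and the same identification of the $\omega_1=1$ branch (after reducing the coefficient $wt(\omega')$ of $f^*(x)$ modulo $2$) with condition \ref{itemB}, before concluding via Theorem \ref{thgbent}. The only cosmetic difference is that the paper first derives $\varphi_1=f^*+g^*$ and $\varphi_i=f^*+f^*(\cdot+\mu_i)$ as necessary from the weight-one vectors $\omega$, whereas you posit them directly as candidates; this does not affect the argument.
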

\begin{proof}
Let $\phi=(\phi_1,\phi_2,\ldots,\phi_r)$ be the $(n,r)$-function with $\phi_1(x)=f(x)+g(x)$ and $\phi_i(x)={\rm Tr}_1^n(\mu_i x)$ for each  $2\leq i\leq r$. Then for any $\omega=(\omega_1,\omega_2,\ldots,\omega_r)\in\mathbb{F}_2^r$, it is easily seen that
\begin{align*}
f(x)+\omega\cdot \phi(x)=
\begin{cases}
f(x)+{\rm Tr}_1^n((\omega_2\mu_2+\omega_3\mu_3+\cdots+\omega_r\mu_r)x), \hspace{0.3cm}&{\rm if}~\omega_1=0,\\
g(x)+{\rm Tr}_1^n((\omega_2\mu_2+\omega_3\mu_3+\cdots+\omega_r\mu_r)x), \hspace{0.3cm}&{\rm if}~\omega_1=1.
\end{cases}
\end{align*}
This implies that Item \ref{item1} of \nameref{propty} is satisfied when $f$ and $g$ are bent. So we have
\begin{align*}
\big(f(x)+\omega\cdot \phi(x)\big)^*=
\begin{cases}
f^*(x+\omega_2\mu_2+\omega_3\mu_3+\cdots+\omega_r\mu_r), \hspace{0.3cm}&{\rm if}~\omega_1=0,\\
g^*(x+\omega_2\mu_2+\omega_3\mu_3+\cdots+\omega_r\mu_r), \hspace{0.3cm}&{\rm if}~\omega_1=1.
\end{cases}
\end{align*}
Note that when ${\rm suppt}(\omega)=\{i\}$, we have
\begin{align*}
f(x)+\omega\cdot \phi(x)
=\begin{cases}g(x),\hspace{0.2cm}&{\rm if}~i=1,\\
f(x)+{\rm Tr}_1^n(\mu_ix) \hspace{0.2cm}&{\rm otherwise},
\end{cases}
\hspace{0.3cm}
{\rm and} \hspace{0.3cm} f^*(x)+\omega\cdot \varphi(x)=f^*(x)+\varphi_i(x).
\end{align*}
Hence, Item \ref{item2} of \nameref{propty} holds only if $\varphi_1(x)=f^*(x)+g^*(x)$ and $\varphi_i(x)=f^*(x)+f^*(x+\mu_i)$ for any  $2\leq i\leq r$. In this case,
\begin{align*}
f^*(x)+\omega\cdot \varphi(x)
=\begin{cases}
f^*(x)+\sum_{i=2}^r\omega_i(f^*(x)+f^*(x+\mu_i)), \hspace{0.3cm}&{\rm if}~\omega_1=0,\\
g^*(x)+\sum_{i=2}^r\omega_i(f^*(x)+f^*(x+\mu_i)), \hspace{0.3cm}&{\rm if}~\omega_1=1.
\end{cases}
\end{align*}
Hence, Item \ref{item2} of \nameref{propty} holds if and only if the following two relations hold:
\begin{align}\label{eqcor51}
f^*(x+\omega_2\mu_2+\cdots+\omega_r\mu_r)&=f^*(x)+\sum_{i=2}^r\omega_i(f^*(x)+f^*(x+\mu_i))\nonumber\\
&=\begin{cases}
f^*(x)+\sum_{i=2}^r\omega_if^*(x+\mu_i), \hspace{0.3cm}&{\rm if}~wt(\omega')~{\rm is~even},\\
\sum_{i=2}^r\omega_if^*(x+\mu_i), \hspace{0.3cm}&{\rm if}~wt(\omega')~{\rm is~odd},
\end{cases}
\end{align}
and
\begin{align}\label{eqcor52}
g^*(x+\omega_2\mu_2+\cdots+\omega_r\mu_r)&=g^*(x)+\sum_{i=2}^r\omega_i(f^*(x)+f^*(x+\mu_i))\nonumber\\
&=\begin{cases}
g^*(x)+\sum_{i=2}^r\omega_if^*(x+\mu_i), \hspace{0.3cm}&{\rm if}~wt(\omega')~{\rm is~even},\\
g^*(x)+f^*(x)+\sum_{i=2}^r\omega_if^*(x+\mu_i), \hspace{0.3cm}&{\rm if}~wt(\omega')~{\rm is~odd},
\end{cases}
\end{align}
where $\omega'=(\omega_2,\omega_3,\ldots,\omega_r)$. By \cite[Lemma 3.3]{ZhLJ1}, we know that Relation \eqref{eqcor51} holds if and only if $D_{\mu_i}D_{\mu_j}f^*=0$ for any  $2\leq i<j\leq r$.
Then the result follows from Theorem \ref{thgbent} immediately.
\end{proof}



Note that Condition  \ref{itemB} of  Corollary \ref{cornew} is elusive when $r>2$. In the following corollary, we give a reduced form by applying  Corollary \ref{cornew} to $g(x)=f(x+\alpha)$ for some $\alpha\in\mathbb{F}_{2^n}^*$.

\begin{corollary}\label{correduced}
Let $f$ be a bent function on $\mathbb{F}_{2^n}$. Let $\alpha\in\mathbb{F}_{2^n}$ and $\mu_2,\mu_3,\ldots,\mu_r\in\mathbb{F}_{2^n}^*$ 
be such that $\alpha\in\big<\mu_2,\mu_3,\ldots,\mu_r\big>^{\bot}$ and $D_{\mu_i}D_{\mu_j}f^*=0$ for any  $2\leq i<j\leq r$. Then for any Boolean function $F$ on $\mathbb{F}_2^r$, the function
 \begin{align*}
 h(x)=f(x)+F(f(x)+f(x+\alpha), {\rm Tr}_1^n(\mu_2x), {\rm Tr}_1^n(\mu_3x),\ldots,{\rm Tr}_1^n(\mu_rx))
 \end{align*}
 is bent. Moreover, the dual of $h$ is
\begin{align*}
h^*(x)=f^*(x)+F(\varphi_1,\varphi_2,\ldots,\varphi_r),
 \end{align*}
 where $\varphi_1(x)={\rm Tr}_1^n(\alpha x)$ and $\varphi_i(x)=f^*(x)+f^*(x+\mu_i)$ for any integer  $2\leq i\leq r$.
\end{corollary}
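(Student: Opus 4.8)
The plan is to apply Corollary \ref{cornew} with the specific choice $g(x)=f(x+\alpha)$, as the preceding remark suggests, so that the work reduces to verifying the two hypotheses \ref{itemA} and \ref{itemB} of that corollary. First I would record that $g$ is bent, being a translate of the bent function $f$, and compute its dual. A one-line Walsh-transform calculation with the substitution $y=x+\alpha$ gives $W_g(\beta)=(-1)^{{\rm Tr}_1^n(\alpha\beta)}W_f(\beta)=2^m(-1)^{f^*(\beta)+{\rm Tr}_1^n(\alpha\beta)}$, whence
$$g^*(x)=f^*(x)+{\rm Tr}_1^n(\alpha x).$$
In particular $\varphi_1(x)=f^*(x)+g^*(x)={\rm Tr}_1^n(\alpha x)$, which already matches the claimed first component of the dual; the remaining components $\varphi_i(x)=f^*(x)+f^*(x+\mu_i)$ for $2\le i\le r$ are inherited verbatim from Corollary \ref{cornew}.

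Next I would check the hypotheses of Corollary \ref{cornew}. Condition \ref{itemA}, namely $D_{\mu_i}D_{\mu_j}f^*=0$ for $2\le i<j\le r$, is assumed directly. For Condition \ref{itemB} I would substitute the formula for $g^*$ into the left-hand side of \eqref{eqitemB}. Writing $S=\sum_{i=2}^r\omega_i\mu_i$, one obtains $g^*(x+S)=f^*(x+S)+{\rm Tr}_1^n(\alpha x)+{\rm Tr}_1^n(\alpha S)$. Here the orthogonality hypothesis $\alpha\in\langle\mu_2,\ldots,\mu_r\rangle^{\bot}$ enters decisively: it forces ${\rm Tr}_1^n(\alpha\mu_i)=0$ for every $i$, so ${\rm Tr}_1^n(\alpha S)=\sum_{i=2}^r\omega_i{\rm Tr}_1^n(\alpha\mu_i)=0$, leaving $g^*(x+S)=f^*(x+S)+{\rm Tr}_1^n(\alpha x)$.

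Now I would invoke Relation \eqref{eqcor51} from the proof of Corollary \ref{cornew}, which is exactly equivalent to Condition \ref{itemA} (via Corollary \ref{corzlj}) and rewrites $f^*(x+S)$ as $f^*(x)+\sum_{i}\omega_i f^*(x+\mu_i)$ when $wt(\omega')$ is even and as $\sum_{i}\omega_i f^*(x+\mu_i)$ when $wt(\omega')$ is odd. Substituting this and using ${\rm Tr}_1^n(\alpha x)=g^*(x)+f^*(x)$ to absorb the trace term, the expression for $g^*(x+S)$ collapses into precisely the two branches of \eqref{eqitemB}: the even case produces $g^*(x)+\sum_i\omega_i f^*(x+\mu_i)$ and the odd case produces $g^*(x)+f^*(x)+\sum_i\omega_i f^*(x+\mu_i)$. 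Thus Condition \ref{itemB} holds, and Corollary \ref{cornew} yields both the bentness of $h$ and the stated dual.

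The only genuinely delicate point is this verification of Condition \ref{itemB}: one must recognize that it is the orthogonality $\alpha\perp\langle\mu_2,\ldots,\mu_r\rangle$ that annihilates the cross term ${\rm Tr}_1^n(\alpha S)$ and makes the parity bookkeeping align, while Condition \ref{itemA} is what converts the single translate $f^*(x+S)$ into the required linear combination of the elementary translates $f^*(x+\mu_i)$. Everything else is a straightforward specialization of Corollary \ref{cornew}.
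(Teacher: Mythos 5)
Your proposal is correct and follows essentially the same route as the paper: specialize Corollary \ref{cornew} to $g(x)=f(x+\alpha)$, note $g^*(x)=f^*(x)+{\rm Tr}_1^n(\alpha x)$, use the orthogonality of $\alpha$ to $\langle\mu_2,\ldots,\mu_r\rangle$ to reduce Condition \ref{itemB} to the relation handled by Corollary \ref{corzlj}, which is equivalent to the assumed second-derivative condition. The paper's proof is just a terser version of this same argument.
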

\begin{proof}
Let $g(x)=f(x+\alpha)$. Then it easily seen that $g^*(x)=f^*(x)+{\rm Tr}_1^n(\alpha x)$, and then Relation \eqref{eqitemB} becomes that
 \begin{align*}
 f^*(x+\sum_{i=2}^r\omega_i\mu_i)
 =\begin{cases}
\sum_{i=2}^r\omega_if^*(x+\mu_i),\hspace{0.2cm}&{\rm if}~wt(\omega'){\rm~ is~ odd},\\
  f^*(x)+\sum_{i=2}^r\omega_if^*(x+\mu_i),\hspace{0.2cm}&{\rm if}~wt(\omega'){\rm~ is~ even},
  \end{cases}
 \end{align*}
since $\alpha\in\big<\mu_2,\mu_3,\ldots,\mu_r\big>^{\bot}$. Hence, Condition \ref{itemB} of Corollary \ref{cornew} is satisfied if and only if $D_{\mu_i}D_{\mu_j}f^*=0$ for any $2\leq i<j\leq r$ by \cite[Lemma 3.3]{ZhLJ1}. Then the result follows from Corollary \ref{cornew} directly.
\end{proof}

\begin{remark}
Note that though the conditions of $h$ to be bent in Corollary \ref{correduced} are similar as that of given in \cite[Theorem 3.5]{ZhLJ1} (in fact, Corollary \ref{correduced} is reduced to \cite[Theorem 3.5]{ZhLJ1} when $\alpha=0$), the corresponding bent functions  in Corollary \ref{correduced}  and \cite[Theorem 3.5]{ZhLJ1} can be EA-inequivalent. For instance, let $n=6$ and
\begin{align*}
f(x)=(x_1,x_2,x_3)\cdot (x_4,x_5,x_6).
\end{align*}
Let $\mu_2=(1,0,0,0,0,0)$, $\mu_3=(0,1,1,0,0,0)$. Then it is easy to check that $D_{\mu_2}D_{\mu_3}f^*=0$. Hence, by \cite[Theorem 3.5]{ZhLJ1}, we have that
$$h(x)=f(x)+F(\mu_2 \cdot x, \mu_3\cdot x )=f(x)+F(x_1,x_2+x_3)$$
 is bent for any Boolean function $F$ on $\mathbb{F}_2^2$; and by Corollary \ref{correduced}, we have that
 \begin{align*}
 \hat{h}(x)=f(x)+\hat{F}\big(f(x)+f(x+\alpha),\mu_2\cdot x,\mu_3\cdot x\big)=f(x)+\hat{F}\big(f(x)+f(x+\alpha),x_1, x_2+x_3\big)
 \end{align*}
 is bent for any $\alpha\in\big<\mu_2,\mu_3\big>^{\bot}$ and any Boolean function $\hat{F}$ on $\mathbb{F}_2^3$. These two bent functions can be clearly EA-inequivalent, since the algebraic degree of $h$ is 2, while the algebraic degree of $\hat{h}$ is 3 when $\alpha=\mu_3$ and $\hat{F}(x_1,x_2,x_3)=x_1x_2x_3$.
\end{remark}


In \cite{Li et al.-2021}, the authors have found two kinds of $f$ and $\phi$ satisfying the conditions of Theorem \ref{thlyj} (that is, $\mathbf{P_r}$ by the previous discussion) for constructing new bent functions. The first kind is to let $f$ be a bent function and $\phi$ be a linear $(n,r)$-function; and the second kind is to let $f$ and $f+\phi_i$ be some self-dual bent functions for each $1\leq i\leq r$. They also invited the readers to find more kinds of $f$ and $\phi$ for obtaining more classes of bent functions in Conclusion of  \cite{Li et al.-2021}. Note that
 we have found a method to find such kinds of $f$ and $\phi$ in Corollary \ref{correduced}.

Then similarly as the concrete bent functions obtained in  \cite{Li et al.-2021}, \cite{WangLB}, \cite{XuGK} and \cite{ZhLJ1}, by applying Corollary \ref{correduced} to the following three monomial bent functions
\begin{align*}
&f_1(x)={\rm Tr}_1^n(\lambda x^{2^t+1}),\\
&f_2(x)={\rm Tr}_1^{6k}(\lambda x^{2^{2k}+2^k+1})\\
&f_3(x)={\rm Tr}_1^{4k}(\lambda x^{2^{2k}+2^{k+1}+1}),
\end{align*}
 respectively; and to the following bent functions with Niho exponents
\begin{align*}
f_4(x)={\rm Tr}_1^m(x^{2^m+1})+{\rm Tr}_1^n\bigg(\sum_{i=1}^{2^{k-1}-1}x^{(2^m-1)\frac{i}{2^k}+1}\bigg),
\end{align*}
  one can also obtain certain concrete bent functions, since by Corollary \ref{correduced}, one only needs to determine the duals of $f_1, f_2, f_3, f_4$ (which have been done in \cite{Leander}, \cite{Li et al.-2021}, \cite{Li et al.-ieice} and \cite{Nihodual}, respectively),  find some elements $\mu_2,\mu_3,\ldots,\mu_r\in\mathbb{F}_{2^n}^*$ such that $D_{\mu_i}D_{\mu_j}f_e^*=0$ for any  $2\leq i<j\leq r$ and any  $1\leq e\leq 4$ (such elements exist by \cite{Mesnager}, \cite{TangCM}, \cite{WangLB}, \cite{XuGK}, \cite{ZhLJ1}),  and find some $\alpha\in\mathbb{F}_{2^n}$ such that $\alpha\in\big<\mu_2,\mu_3,\ldots,\mu_r\big>^{\bot}$. Here, the concrete results do not unfolded in details.

\section{Conclusion}\label{sec:conclusion}

This paper gave another characterization for the generic construction of bent functions given in \cite{Li et al.-2021}, which enabled us to obtain another efficient construction of bent functions and to provide a positive answer on the problem of bent functions proposed in Conclusion of \cite{Li et al.-2021}.

\section*{Acknowledgments}
This work was supported in part by the National Key Research and Development Program of China under Grant 2019YFB2101703, in part
by the National Natural Science Foundation of China under Grants 62302001, 62372221, 61972258, 62272107 and U19A2066,  in part by the  China Postdoctoral Science Foundation under Grant 2023M740714, in part by the
Innovation Action Plan of Shanghai Science and Technology under Grants  20511102200 and 21511102200, in part by the Key Research and Development Program of Guangdong Province  under Grant 2020B0101090001, in part by the Natural Science Foundation for the Higher Education Institutions of Anhui Province under Grant 2023AH050250.



\end{document}